\newtheorem{theorem}{Theorem}
\newtheorem{lemma}{Lemma}
\newtheorem{corollary}{Corollary}
\newtheorem{definition}{Definition}
\newtheorem{example}{Example}
\def\thmhead@plain#1#2#3{%
  \thmname{#1}\thmnumber{\@ifnotempty{#1}{ }\@upn{#2}}%
  \thmnote{ {\the\thm@notefont#3}}}
\let\thmhead\thmhead@plain
\DeclareMathOperator{\Tr}{Tr}
\DeclareMathOperator{\PT}{PT}
\begin{document}
\title{Quantum data-hiding scheme using orthogonal separable states}
\author{Donghoon Ha}
\affiliation{Department of Applied Mathematics and Institute of Natural Sciences, Kyung Hee University, Yongin 17104, Republic of Korea}
\author{Jeong San Kim}
\email{freddie1@khu.ac.kr}
\affiliation{Department of Applied Mathematics and Institute of Natural Sciences, Kyung Hee University, Yongin 17104, Republic of Korea}
\begin{abstract}
We consider bipartite quantum state discrimination and present a quantum data-hiding scheme utilizing an orthogonal separable state ensemble. Using a bound on local minimum-error discrimination, we provide a sufficient condition for the separable state ensemble to be used in constructing a quantum data-hiding scheme. Our results are illustrated with various examples in bipartite quantum systems. As our scheme employs separable states of low-dimensional quantum systems, it becomes more feasible for practical implementation.
\end{abstract}
\maketitle

\section{Introduction}\label{sec:intro}
In the realm of information theory, data hiding refers to a communication protocol that encodes specific data and distributes it among multiple users. Each user cannot decrypt the hidden data with only their received information, requiring cooperation with others to access it. Classical data hiding involves dividing the data into classical information and distributing it to users\cite{sham1979}. The hidden data can only be recovered when a sufficient number of these divided pieces are combined through classical communication. Thus, blocking classical communication is essential for hiding data in classical schemes.

Quantum data hiding encodes classical data into multipartite quantum states, providing a fundamentally stronger level of concealment compared with classical methods\cite{divi2002,divi2003,lami2018}. While the data remains hidden from each user's local operations, quantum data hiding can maintain concealment even if classical communication between users is allowed. Specifically, the probability of successfully recovering the hidden data using only \emph{local operations and classical communication}(LOCC) is nearly equivalent to random guessing. Successful recovery requires global measurements that involve quantum channels, shared entanglement, or direct interactions.

Since the initial use of Bell-state ensembles to hide one classical bit, there have been several quantum data-hiding protocols proposed not only in bipartite quantum systems, but also in multipartite quantum systems (involving multiple users)\cite{terh2001,mori2013,ha20241,ha20251}. 
These protocols typically employ orthogonal state ensembles to ensure perfect data recovery through appropriate global measurements. 
The use of entangled states in these ensembles may seem the reason of the nonlocality making the data hidden from LOCC measurements. 

Interestingly, it has also been demonstrated that quantum state ensembles containing only separable states can also be used to construct data-hiding schemes\cite{egge2002}. However, the separable states used in the protocol are non-orthogonal, and data recovery through appropriate global measurements is only asymptotically perfect with respect to the dimensionality.
This naturally raises the question of whether quantum data hiding can be constructed using only orthogonal separable states.

Here, we present a quantum data-hiding scheme utilizing an orthogonal separable state ensemble. The orthogonality of these states ensures perfect data recovery without reliance on asymptotic methods. While our scheme does asymptotically conceal classical data from LOCC measurements, it employs separable states of low-dimensional quantum systems repeatedly, making it more feasible for practical implementation.

This paper is organized as follows:
In Sec.~\ref{sec:pre}, we first consider the discrimination of two bipartite quantum states and provide a bound on the optimal local discrimination when the state ensemble possesses specific properties related to partial transposition. 
In Sec.~\ref{sec:mfe}, we introduce a quantum data-hiding scheme utilizing an orthogonal state ensemble that exhibits these same properties associated with partial transposition.
In Sec.~\ref{sec:ex}, we establish a sufficient condition for an ensemble of two bipartite quantum states to be used in constructing a one-bit hiding scheme. 
Our results are illustrated with examples of orthogonal separable state ensembles. 
In Sec.~\ref{sec:dis}, we provide a summary of our findings and explore potential avenues for future research.

\section{Minimum-error discrimination between two bipartite quantum states}\label{sec:pre}
In bipartite quantum systems held by two players, Alice and Bob, a state is described by a density operator $\rho$, that is, a positive-semidefinite operator $\rho\succeq0$ with $\Tr\rho=1$, acting on a bipartite complex Hilbert space $\mathcal{H}=\mathbb{C}^{d_{\sf A}}\otimes\mathbb{C}^{d_{\sf B}}$.
A measurement is expressed by a positive operator-valued measure $\{M_{i}\}_{i}$, that is, a set of positive-semidefinite operators $M_{i}\succeq0$ acting on $\mathcal{H}$ and satisfying $\sum_{i}M_{i}=\mathbbm{1}$, where $\mathbbm{1}$ is the identity operator acting on $\mathcal{H}$.
When $\{M_{i}\}_{i}$ is performed on a quantum system prepared with $\rho$, the probability of obtaining the measurement outcome corresponding to $M_{j}$ is $\Tr(\rho M_{j})$.

Let us consider the situation of discriminating two bipartite quantum states $\rho_{0}$ and $\rho_{1}$ prepared with probabilities $\eta_{0}$ and $\eta_{1}$, respectively. We use the term \emph{state discrimination of an ensemble} 
\begin{equation}\label{eq:ens}
\mathcal{E}=\{\eta_{0},\rho_{0};\eta_{1},\rho_{1}\}.
\end{equation}
If we use a measurement 
\begin{equation}\label{eq:meas}
\mathcal{M}=\{M_{0},M_{1}\}
\end{equation} 
to discriminate the states of $\mathcal{E}$, 
our decision rule is the following; for each $i=0,1$, the prepared state is guessed to be $\rho_{i}$ when $M_{i}$ is detected.
The \emph{minimum-error discrimination} of $\mathcal{E}$ is to achieve the maximum average probability of correctly guessing the prepared state from $\mathcal{E}$, that is,
\begin{equation}\label{eq:pge}
p_{\sf G}(\mathcal{E})=\max_{\mathcal{M}}\sum_{i=0}^{1}\eta_{i}\Tr(\rho_{i}M_{i}),
\end{equation}
where the maximum is taken over all possible measurements\cite{hels1969}. 

\begin{definition}\label{def:ppt}
A Hermitian operator $E$ on $\mathcal{H}$ is called \emph{positive partial transpose}(PPT) if the partially transposed operator is positive semidefinite; that is,
\begin{equation}\label{eq:ppt}
E^{\PT}\succeq0
\end{equation}
where the superscript $\PT$ is to indicate the partial transposition\cite{pere1996,pptp}.
\end{definition}

A measurement $\{M_{i}\}_{i}$ is called a \emph{PPT measurement} if $M_{i}$ is PPT for all $i$, and a measurement is called a LOCC measurement if it can be realized by LOCC.
When the available measurements are limited PPT measurements, we denote the maximum success probability by
\begin{equation}\label{eq:pte}
p_{\sf PPT}(\mathcal{E})=\max_{\mathsf{PPT}\,\mathcal{M}}\sum_{i=0}^{1}\eta_{i}\Tr(\rho_{i}M_{i}).
\end{equation}
Similarly, we denote 
\begin{equation}\label{eq:ple}
p_{\sf L}(\mathcal{E})=\max_{\mathsf{LOCC}\,\mathcal{M}}\sum_{i=0}^{1}\eta_{i}\Tr(\rho_{i}M_{i}),
\end{equation} 
where the maximum is taken over all possible LOCC measurements.
From the definitions of $p_{\sf G}(\mathcal{E})$, $p_{\sf PPT}(\mathcal{E})$ and $p_{\sf L}(\mathcal{E})$, we have
\begin{equation}\label{eq:pgplr}
p_{\sf L}(\mathcal{E})\leqslant
p_{\sf PPT}(\mathcal{E})\leqslant
p_{\sf G}(\mathcal{E})
\end{equation}
where the first inequality is from the fact that every LOCC measurement is a PPT measurement\cite{chit2014}.
Since guessing the prepared state as the state with the greatest probability is obviously a LOCC measurement, we also note that
\begin{equation}\label{eq:lptm}
\tfrac{1}{2}\leqslant\max\{\eta_{0},\eta_{1}\}\leqslant p_{\sf L}(\mathcal{E}).
\end{equation}

For a Hermitian operator $H$ acting on the Hilbert space $\mathcal{H}$, we define $q(H)$ as the quantity
\begin{equation}\label{eq:dqe}
q(H)=\tfrac{1}{2}+\Tr|H|
\end{equation}
where $|H|$ is the positive square root of $H^{\dagger}H$, that is,
\begin{equation}
|H|=\sqrt{H^{\dagger}H}.
\end{equation}
The following theorem provides a sufficient condition for $q(H)$ to be an upper bound of $p_{\sf PPT}(\mathcal{E})$.
\begin{theorem}\label{thm:upbd}
For an ensemble $\mathcal{E}=\{\eta_{0},\rho_{0};\eta_{1},\rho_{1}\}$ and
\begin{equation}\label{eq:spc}
\Lambda_{\mathcal{E}}:=\eta_{0}\rho_{0}-\eta_{1}\rho_{1},
\end{equation} 
if there exists a Hermitian operator $H$ satisfying
\begin{equation}\label{eq:hsc}
H+H^{\PT}=\Lambda_{\mathcal{E}},
\end{equation}
then $q(H)$ is an upper bound of $p_{\sf PPT}(\mathcal{E})$, that is,
\begin{equation}\label{eq:qpte}
p_{\sf PPT}(\mathcal{E})\leqslant q(H).
\end{equation}
\end{theorem}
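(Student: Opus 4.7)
The plan is to reduce the optimization defining $p_{\sf PPT}(\mathcal{E})$ to a variational expression in which the partial transpose of the measurement effects plays a symmetric role to the partial transpose appearing in the hypothesis $H+H^{\PT}=\Lambda_{\mathcal{E}}$. First, I would use $M_{0}+M_{1}=\mathbbm{1}$ to rewrite the objective in (\ref{eq:pte}) as
\begin{equation*}
\eta_{0}\Tr(\rho_{0}M_{0})+\eta_{1}\Tr(\rho_{1}M_{1})=\tfrac{1}{2}+\tfrac{1}{2}\Tr\bigl(\Lambda_{\mathcal{E}}(M_{0}-M_{1})\bigr),
\end{equation*}
so that the inequality (\ref{eq:qpte}) is equivalent to the assertion $\Tr(\Lambda_{\mathcal{E}}(M_{0}-M_{1}))\leqslant 2\Tr|H|$ for every PPT measurement $\{M_{0},M_{1}\}$.

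Next, I would substitute $\Lambda_{\mathcal{E}}=H+H^{\PT}$ and shift the partial transpose off of $H$ in the second summand via the elementary identity $\Tr(A^{\PT}B)=\Tr(AB^{\PT})$, obtaining
\begin{equation*}
\Tr\bigl(\Lambda_{\mathcal{E}}(M_{0}-M_{1})\bigr)=\Tr\bigl(H(M_{0}-M_{1})\bigr)+\Tr\bigl(H(M_{0}^{\PT}-M_{1}^{\PT})\bigr).
\end{equation*}
The PPT assumption is essential here: it guarantees $M_{0}^{\PT},M_{1}^{\PT}\succeq 0$, and because partial transposition preserves the identity, $\{M_{0}^{\PT},M_{1}^{\PT}\}$ is itself a legitimate two-outcome measurement.

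Finally, I would invoke the trace-norm/operator-norm duality $|\Tr(HK)|\leqslant\Tr|H|\,\|K\|_{\infty}$ applied to each of the two terms above. For any two-outcome measurement $\{N_{0},N_{1}\}$, the Hermitian difference $N_{0}-N_{1}=2N_{0}-\mathbbm{1}$ has spectrum contained in $[-1,1]$, so $\|N_{0}-N_{1}\|_{\infty}\leqslant 1$. Thus both traces on the right-hand side are bounded in absolute value by $\Tr|H|$, yielding the factor $2$ that matches $q(H)=\tfrac{1}{2}+\Tr|H|$.

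The main conceptual step I anticipate is recognizing that the hypothesis $H+H^{\PT}=\Lambda_{\mathcal{E}}$, which by itself pins down $\Lambda_{\mathcal{E}}$ only up to a partial-transpose symmetrization, interacts with the PPT hypothesis on $\mathcal{M}$ in exactly the right way to produce two separate applications of the bound $|\Tr(HK)|\leqslant\Tr|H|$ rather than one. The remaining ingredients — the rewriting through $M_{0}-M_{1}$, the identity $\Tr(A^{\PT}B)=\Tr(AB^{\PT})$, and the duality between the trace norm and operator norm — are routine.
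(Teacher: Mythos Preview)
Your proof is correct. The paper takes a slightly different organizational route: rather than the Helstrom rewriting $\tfrac{1}{2}+\tfrac{1}{2}\Tr(\Lambda_{\mathcal{E}}(M_{0}-M_{1}))$ followed by trace-norm/operator-norm duality, it decomposes $H=H^{(+)}-H^{(-)}$ into its positive and negative parts, observes the balance identity
\[
\eta_{0}\rho_{0}+H^{(-)}+H^{(-)\PT}=\eta_{1}\rho_{1}+H^{(+)}+H^{(+)\PT},
\]
adds the nonnegative slack terms $\Tr[(H^{(-)}+H^{(-)\PT})M_{0}]$ and $\Tr[(H^{(+)}+H^{(+)\PT})M_{1}]$ (nonnegativity uses the same identity $\Tr(A^{\PT}B)=\Tr(AB^{\PT})$ you invoke), and then collapses everything via $M_{0}+M_{1}=\mathbbm{1}$. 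Both arguments exploit the PPT hypothesis in exactly the same place and ultimately reduce to $\Tr|H|=\Tr H^{(+)}+\Tr H^{(-)}$; your duality formulation is a bit more compact, while the paper's explicit slack construction makes the role of $H^{(\pm)}$ visible and may be more convenient when one later wants to identify the optimal $H$ (as in Theorem~\ref{thm:qeqp}).
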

\begin{proof}
By using the definition of $\Lambda_{\mathcal{E}}$ in Eq.~\eqref{eq:spc} and the notation
\begin{equation}\label{eq:hpm}
H^{(\pm)}=\tfrac{1}{2}(|H|\pm H),
\end{equation}
we can easily see that
\begin{equation}\label{eq:etzo}
(\eta_{0}\rho_{0}-\eta_{1}\rho_{1})-(H^{(+)}-H^{(-)})
-(H^{(+)\PT}-H^{(-)\PT})
=\Lambda_{\mathcal{E}}-H-H^{\PT}=\mathbb{O},
\end{equation}
where the last equality is from Eq.~\eqref{eq:hsc} and $\mathbb{O}$ is the zero operator acting on the Hilbert space $\mathcal{H}=\mathbb{C}^{d_{\sf A}}\otimes\mathbb{C}^{d_{\sf B}}$.
From Eq.~\eqref{eq:etzo}, we have
\begin{equation}\label{eq:hpme}
\eta_{0}\rho_{0}+H^{(-)}+H^{(-)\PT}
=\eta_{1}\rho_{1}+H^{(+)}+H^{(+)\PT}.
\end{equation}
Since $\Tr(AB)=\Tr(A^{\PT}B^{\PT})$ for any Hermitian operators $A$ and $B$, it follows from $H^{(\pm)}\succeq0$ that
\begin{equation}\label{eq:dpdp}
\Tr[(H^{(\pm)}+H^{(\pm)\PT})E]
=\Tr(H^{(\pm)}E)+\Tr(H^{(\pm)}E^{\PT})
\geqslant0
\end{equation}
for any positive-semidefinite PPT operator $E$.

For any PPT measurement $\mathcal{M}=\{M_{0},M_{1}\}$, we have
\begin{eqnarray}
\sum_{i=0}^{1}\eta_{i}\Tr(\rho_{i}M_{i})
&\leqslant&\Tr(\eta_{0}\rho_{0}M_{0})+\Tr[(H^{(-)}+H^{(-)\PT})M_{0}]
+\Tr(\eta_{1}\rho_{1}M_{1})+\Tr[(H^{(+)}+H^{(+)\PT})M_{1}]\nonumber\\[-1mm]
&=&\Tr[(\eta_{0}\rho_{0}+H^{(-)}+H^{(-)\PT})M_{0}]
+\Tr[(\eta_{1}\rho_{1}+H^{(+)}+H^{(+)\PT})M_{1}]\nonumber\\[1mm]
&=&\tfrac{1}{2}\Tr[(\eta_{0}\rho_{0}+H^{(-)}+H^{(-)\PT})(M_{0}+M_{1})]
+\tfrac{1}{2}\Tr[(\eta_{1}\rho_{1}+H^{(+)}+H^{(+)\PT})(M_{0}+M_{1})]\nonumber\\[1mm]
&=&\tfrac{1}{2}+\Tr H^{(-)}+\Tr H^{(+)}=\tfrac{1}{2}+\Tr |H|=q(H),\label{eq:surs}
\end{eqnarray}
where the first inequality is from Inequality~\eqref{eq:dpdp}, the second equality is due to Eq.~\eqref{eq:hpme}, the third equality is by $M_{0}+M_{1}=\mathbbm{1}$, the fourth equality follows from Eq.~\eqref{eq:hpm}, and the last equality is from the definition of $q(H)$ in Eq.~\eqref{eq:dqe}.
Thus, the definition of $p_{\sf PPT}(\mathcal{E})$ in Eq.~\eqref{eq:pte} and Inequality~\eqref{eq:surs} lead us to Inequality~\eqref{eq:qpte}.
\end{proof}
We remark that Eq.~\eqref{eq:hsc} in Theorem~\ref{thm:upbd} is equivalent to 
\begin{equation}\label{eq:lpl}
\Lambda_{\mathcal{E}}^{\PT}=\Lambda_{\mathcal{E}},
\end{equation}
that is, $\Lambda_{\mathcal{E}}$ is invariant under partial transposition.
However, we further remark that the Hermitian operator $H$ satisfying Eq.~\eqref{eq:hsc} is not unique, in general.
The following theorem shows the existence of a Hermitian operator $H$ satisfying Eq.~\eqref{eq:hsc} and $p_{\sf PPT}(\mathcal{E})=q(H)$ when Condition~\eqref{eq:lpl} is satisfied.
\begin{theorem}\label{thm:qeqp}
For an ensemble $\mathcal{E}=\{\eta_{0},\rho_{0};\eta_{1},\rho_{1}\}$ with Condition~\eqref{eq:lpl}, we have
\begin{equation}\label{eq:qeqp}
p_{\sf PPT}(\mathcal{E})= \min q(H)
\end{equation}
over all possible Hermitian operators $H$ satisfying Eq.~\eqref{eq:hsc}, where $p_{\sf PPT}(\mathcal{E})$ and $q(H)$ are defined in Eqs.~\eqref{eq:pte} and \eqref{eq:dqe}, respectively.
\end{theorem}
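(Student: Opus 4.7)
The plan is to show that the inequality $p_{\sf PPT}(\mathcal{E}) \le \min q(H)$ follows immediately from Theorem~\ref{thm:upbd} (which gives $p_{\sf PPT}(\mathcal{E}) \le q(H)$ for every Hermitian $H$ satisfying Eq.~\eqref{eq:hsc}), so what remains is to exhibit a particular Hermitian $H^\star$ satisfying Eq.~\eqref{eq:hsc} with $q(H^\star) = p_{\sf PPT}(\mathcal{E})$; this will also confirm that the minimum is attained. I would produce such an $H^\star$ from an optimal solution of the semidefinite-programming (SDP) dual of the optimization that defines $p_{\sf PPT}(\mathcal{E})$, using the hypothesis $\Lambda_{\mathcal{E}}^{\PT} = \Lambda_{\mathcal{E}}$ to symmetrize the dual certificate.

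First, I would formulate the Lagrangian dual. The primal SDP maximizes $\sum_i \eta_i \Tr(\rho_i M_i)$ subject to $M_i \succeq 0$, $M_i^{\PT} \succeq 0$, and $M_0 + M_1 = \mathbbm{1}$. Introducing a Hermitian multiplier $K$ for the equality constraint, and using that the cone dual to the PPT cone $\{M : M \succeq 0,\, M^{\PT} \succeq 0\}$ is $\{P + Q^{\PT} : P, Q \succeq 0\}$, the standard duality calculation gives
$$\min \; \Tr K \quad \text{subject to} \quad K = \eta_i \rho_i + P_i + Q_i^{\PT}, \;\; P_i, Q_i \succeq 0 \;\; (i = 0, 1).$$
The choice $M_0 = M_1 = \mathbbm{1}/2$ is strictly primal feasible (all four positive-semidefinite constraints become $\mathbbm{1}/2 \succ 0$), so Slater's condition guarantees strong duality and attainment of the dual optimum at some $(K^\star, P_0^\star, P_1^\star, Q_0^\star, Q_1^\star)$ with $\Tr K^\star = p_{\sf PPT}(\mathcal{E})$.

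Next, I would extract $H^\star$ from this certificate. Subtracting the two dual equations yields $\Lambda_{\mathcal{E}} = (P_1^\star - P_0^\star) + (Q_1^\star - Q_0^\star)^{\PT}$. Applying partial transposition and invoking $\Lambda_{\mathcal{E}}^{\PT} = \Lambda_{\mathcal{E}}$ produces the companion identity $\Lambda_{\mathcal{E}} = (Q_1^\star - Q_0^\star) + (P_1^\star - P_0^\star)^{\PT}$, and averaging the two gives $\Lambda_{\mathcal{E}} = H^\star + (H^\star)^{\PT}$ with
$$H^\star := \tfrac{1}{2}\bigl[(P_1^\star + Q_1^\star) - (P_0^\star + Q_0^\star)\bigr].$$
Since $H^\star$ is a difference of positive semidefinite operators, the trace-norm bound gives $\Tr|H^\star| \le \tfrac{1}{2}(\Tr P_0^\star + \Tr P_1^\star + \Tr Q_0^\star + \Tr Q_1^\star)$. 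Averaging the scalar identities $\Tr K^\star = \eta_i + \Tr P_i^\star + \Tr Q_i^\star$ over $i = 0, 1$ shows $\Tr K^\star = \tfrac{1}{2} + \tfrac{1}{2}(\Tr P_0^\star + \Tr P_1^\star + \Tr Q_0^\star + \Tr Q_1^\star)$. Hence $q(H^\star) = \tfrac{1}{2} + \Tr|H^\star| \le \Tr K^\star = p_{\sf PPT}(\mathcal{E})$, and combined with Theorem~\ref{thm:upbd} this forces equality, yielding Eq.~\eqref{eq:qeqp}.

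The main obstacle I anticipate is identifying the dual cone of the PPT cone as $\{P + Q^{\PT} : P, Q \succeq 0\}$ and thereby casting the dual SDP into the stated form, which is where the interplay between positivity and partial transposition enters. Once this is in hand, the symmetric-averaging step that uses $\Lambda_{\mathcal{E}}^{\PT} = \Lambda_{\mathcal{E}}$ is the conceptual key that collapses the pair $(P_i^\star, Q_i^\star)$ into a single Hermitian $H^\star$ satisfying $H^\star + (H^\star)^{\PT} = \Lambda_{\mathcal{E}}$.
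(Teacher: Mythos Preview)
Your proposal is correct and follows essentially the same route as the paper's proof: both extract from an optimal dual certificate $(K,P_0,P_1,Q_0,Q_1)$ with $K=\eta_i\rho_i+P_i+Q_i^{\PT}$ and $\Tr K=p_{\sf PPT}(\mathcal{E})$ the Hermitian $H=\tfrac12(P_1+Q_1-P_0-Q_0)$, verify $H+H^{\PT}=\Lambda_{\mathcal{E}}$ via the hypothesis $\Lambda_{\mathcal{E}}^{\PT}=\Lambda_{\mathcal{E}}$, and bound $q(H)\le\Tr K$ by the triangle inequality for the trace norm together with the averaged trace identity. The only difference is that the paper imports the existence of the dual certificate from a prior reference, whereas you derive it directly via SDP duality and Slater's condition.
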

\begin{proof}
Due to Theorem~\ref{thm:upbd}, it is enough to show the existence of a Hermitian operator $H$ satisfying Eq.~\eqref{eq:hsc} and $p_{\sf PPT}(\mathcal{E})=q(H)$.
We first note that there exists a Hermitian operator $E$ satisfying
\begin{subequations}
\begin{gather}
p_{\sf PPT}(\mathcal{E})=\Tr E,\label{eq:ptte}\\[1mm]
E-\eta_{i}\rho_{i}=P_{i}+Q_{i}^{\PT}\label{eq:eepq}
\end{gather}
\end{subequations}
for all $i=0,1$, where $P_{i}$ and $Q_{i}$ are positive-semidefinite operators\cite{ha20232}.

For the Hermitian operator
\begin{equation}\label{eq:pqhm}
H=\tfrac{1}{2}(P_{1}+Q_{1}-P_{0}-Q_{0}),
\end{equation}
Eq.~\eqref{eq:hsc} holds because
\begin{eqnarray}
H+H^{\PT}&=&\tfrac{1}{2}(P_{1}+Q_{1}^{\PT}-P_{0}-Q_{0}^{\PT})
+\tfrac{1}{2}(P_{1}+Q_{1}^{\PT}-P_{0}-Q_{0}^{\PT})^{\PT}\nonumber\\[1mm]
&=&\tfrac{1}{2}(\eta_{0}\rho_{0}-\eta_{1}\rho_{1})+\tfrac{1}{2}(\eta_{0}\rho_{0}-\eta_{1}\rho_{1})^{\PT}
=\tfrac{1}{2}\Lambda_{\mathcal{E}}+\tfrac{1}{2}\Lambda_{\mathcal{E}}^{\PT}
=\Lambda_{\mathcal{E}},
\label{eq:hptc}
\end{eqnarray}
where the second equality is by Eq.~\eqref{eq:eepq}, the third equality is from the definition of $\Lambda_{\mathcal{E}}$ in Eq.~\eqref{eq:spc}, and the last equality is due to Condition~\eqref{eq:lpl}. 
Moreover, $p_{\sf PPT}(\mathcal{E})=q(H)$ is satisfied because
\begin{eqnarray}
p_{\sf PPT}(\mathcal{E})&\leqslant&q(H)=\tfrac{1}{2}+\Tr|H|\nonumber\\[1mm]
&\leqslant&\tfrac{1}{2}+\tfrac{1}{2}\Tr(P_{1}+Q_{1})+\tfrac{1}{2}\Tr(P_{0}+Q_{0})\nonumber\\[1mm]
&=&\tfrac{1}{2}+\tfrac{1}{2}\Tr(P_{1}+Q_{1}^{\PT})+\tfrac{1}{2}\Tr(P_{0}+Q_{0}^{\PT})\nonumber\\[1mm]
&=&\Tr E=p_{\sf PPT}(\mathcal{E}),\label{eq:tleq}
\end{eqnarray}
where the first inequality is from Theorem~\ref{thm:upbd}, 
the first equality is due to Eq.~\eqref{eq:dqe}, 
the second inequality is by Eq.~\eqref{eq:pqhm} and the triangle inequality of trace norm, the second equality follows from $\Tr A^{\PT} =\Tr A$ for any Hermitian operator $A$, the third equality is due to Eq.~\eqref{eq:eepq}, and the last equality is from Eq.~\eqref{eq:ptte}.
\end{proof}

\section{Multi-fold ensemble and hiding one classical bit}\label{sec:mfe}
In this section, we first introduce the notion of a \emph{multifold ensemble} and provide a sufficient condition for two-state ensembles to be used in constructing a scheme for hiding one classical bit. In our scheme, the hidden bit can be fully recovered using appropriate global measurements, while LOCC measurements provide arbitrarily little information about the globally accessible bit.

For a positive integer $L$, we use $\mathbb{Z}_{2}^{L}$ to denote the set of all $L$-dimensional vectors over $\mathbb{Z}_{2}=\{0,1\}$, that is,
\begin{equation}\label{eq:ztl}
\mathbb{Z}_{2}^{L}=\{(b_{1},\ldots,b_{L})\,|\,b_{1},\ldots,b_{L}\in\mathbb{Z}_{2}\}.
\end{equation}
Equivalently, we can also consider $\mathbb{Z}_{2}^{L}$ as the set of all $L$-bit strings.
For the ensemble $\mathcal{E}$ in Eq.~\eqref{eq:ens} and each string $\vec{b}\in\mathbb{Z}_{2}^{L}$, we define
\begin{equation}\label{eq:ervb}
\eta_{\vec{b}}=\prod_{l=1}^{L}\eta_{b_{l}},~~\rho_{\vec{b}}=\bigotimes_{l=1}^{L}\rho_{b_{l}},
\end{equation}
where $b_{l}$ is the $l$th entry of $\vec{b}=(b_{1},\ldots,b_{L})$. 
We further denote by $\mathcal{E}^{\otimes L}$ the \emph{$L$-fold ensemble} of $\mathcal{E}$, that is,
\begin{equation}\label{eq:eol}
\mathcal{E}^{\otimes L}=\{\eta_{\vec{b}},\rho_{\vec{b}}\}_{\vec{b}\in\mathbb{Z}_{2}^{L}},
\end{equation}
in which the state $\rho_{\vec{b}}$ is prepared with the probability $\eta_{\vec{b}}$.
For each string $\vec{b}\in\mathbb{Z}_{2}^{L}$, we define $\omega_{2}(\vec{b})$ as the modulo 2 summation of all the entries in $\vec{b}=(b_{1},\ldots,b_{L})$, that is,
\begin{equation}\label{eq:mta}
\omega_{2}(\vec{b})=b_{1}\oplus\cdots\oplus b_{L}\in\mathbb{Z}_{2},
\end{equation}
where $\oplus$ denotes the modulo 2 addition in $\mathbb{Z}_{2}$.

Let us consider the situation of guessing $\omega_{2}(\vec{b})$ of the prepared state $\rho_{\vec{b}}$ from the ensemble $\mathcal{E}^{\otimes L}$ in Eq.~\eqref{eq:eol}.
This situation is equivalent to discriminating the two bipartite quantum states $\rho_{0}^{(L)}$ and $\rho_{1}^{(L)}$ prepared with the probabilities $\eta_{0}^{(L)}$ and $\eta_{1}^{(L)}$, respectively, where
\begin{equation}\label{eq:eril}
\eta_{i}^{(L)}=\sum_{\substack{\vec{b}\in\mathbb{Z}_{2}^{L}\\ \omega_{2}(\vec{b})=i}}\eta_{\vec{b}},~~
\rho_{i}^{(L)}=\frac{1}{\eta_{i}^{(L)}}\sum_{\substack{\vec{b}\in\mathbb{Z}_{2}^{L}\\ \omega_{2}(\vec{b})=i}}\eta_{\vec{b}}\rho_{\vec{b}},~~i=0,1.
\end{equation}
In other words, quantum state discrimination of the ensemble
\begin{equation}\label{eq:esl}
\mathcal{E}^{(L)}=\{\eta_{0}^{(L)},\rho_{0}^{(L)};\eta_{1}^{(L)},\rho_{1}^{(L)}\}.
\end{equation}
We note that
\begin{equation}\label{eq:lael}
\Lambda_{\mathcal{E}^{(L)}}=\eta_{0}^{(L)}\rho_{0}^{(L)}-\eta_{1}^{(L)}\rho_{1}^{(L)}
=(\eta_{0}\rho_{0}-\eta_{1}\rho_{1})^{\otimes L}=\Lambda_{\mathcal{E}}^{\otimes L},
\end{equation}
where $\Lambda_{\mathcal{E}}$ is defined in Eq.~\eqref{eq:spc} and the second equality 
is satisfied because Eq.~\eqref{eq:eril} implies
\begin{equation}\label{eq:erle}
\eta_{i}^{(L)}\rho_{i}^{(L)}=\frac{1}{2}\Big[(\eta_{0}\rho_{0}+\eta_{1}\rho_{1})^{\otimes L}
+(-1)^{i}(\eta_{0}\rho_{0}-\eta_{1}\rho_{1})^{\otimes L}\Big],~i=0,1.
\end{equation}

Now, let us consider a bipartite \emph{orthogonal} quantum state ensemble $\mathcal{E}=\{\eta_{0},\rho_{0};\eta_{1},\rho_{1}\}$ satisfying 
\begin{equation}\label{eq:lptu}
\lim_{L\rightarrow\infty}p_{\sf L}(\mathcal{E}^{(L)})=\frac{1}{2}.
\end{equation}
This ensemble $\mathcal{E}$ can be used to construct a quantum data-hiding scheme that hides one classical bit\cite{ha20241}. 
The scheme consists of two steps:

Step 1. The hider, the extra party, first choose a state $\rho_{b_{1}}$ from $\mathcal{E}$ with the corresponding probability $\eta_{b_{1}}$ for $b_{1}\in\mathbb{Z}_{2}$ and share it with Alice and Bob. 
The hider repeats this process $L$ times, so that $\rho_{b_{l}}$ is chosen with probability $\eta_{b_{l}}$ in the $l$th repetition for $l=1,\ldots,L$. 
This whole procedure is equivalent to the situation in which the hider first prepares the state
\begin{equation}\label{eq:psrb}
\rho_{\vec{b}}=\rho_{b_{1}}\otimes\cdots\otimes\rho_{b_{L}}
\end{equation}
from the ensemble $\mathcal{E}^{\otimes L}$ in Eq.~\eqref{eq:eol} with the corresponding probability $\eta_{\vec{b}}$ for $\vec{b}=(b_{1},\ldots,b_{L})\in\mathbb{Z}_{2}^{L}$ and sends it to Alice and Bob.

Step 2. To hide one classical bit $x\in\mathbb{Z}_{2}$, the hider broadcasts the classical information $z$ to Alice and Bob,
\begin{equation}\label{eq:zxy}
z=x\oplus y
\end{equation}
where 
\begin{equation}\label{eq:yob}
y=\omega_{2}(\vec{b})
\end{equation}
for the prepared state $\rho_{\vec{b}}$ in Eq.~\eqref{eq:psrb}.

As the classical information $z$ in Eq.~\eqref{eq:zxy} was broadcasted, guessing the data $x$ is equivalent to guessing the data $y$ in Eq.~\eqref{eq:yob}.
Since the situation of guessing the data $y$ is equivalent to the state discrimination of $\mathcal{E}^{(L)}$, the maximum average probability of correctly guessing the data $x$ is equal to the maximum average probability of discriminating the states from the ensemble $\mathcal{E}^{(L)}$ in Eq.~\eqref{eq:esl}, that is, $p_{\sf G}(\mathcal{E}^{(L)})$.
Similarly, the maximum average probability of correctly guessing the data $x$ using only LOCC measurements becomes $p_{\sf L}(\mathcal{E}^{(L)})$.

The orthogonality between the states of $\mathcal{E}$ implies that the states of $\mathcal{E}^{\otimes L}$ in Eq.~\eqref{eq:eol} are mutually orthogonal.
In this case, the states $\rho_{0}^{(L)}$ and $\rho_{1}^{(L)}$ in Eq.~\eqref{eq:eril} are orthogonal. 
Thus, we have
\begin{equation}\label{eq:pgeo}
p_{\sf G}(\mathcal{E}^{(L)})=1.
\end{equation}
In other words, the data $x$ are accessible when Alice and Bob can collaborate to use global measurements.
On the other hand, Eq.~\eqref{eq:lptu} means $p_{\sf L}(\mathcal{E}^{(L)})$ can be arbitrarily close to $1/2$ with the choice of an appropriately large $L$. Thus, the globally accessible data $x$ are perfectly hidden asymptotically if Alice and Bob can use only LOCC measurements.

From Inequalities~\eqref{eq:pgplr} and \eqref{eq:lptm}, we have
\begin{equation}\label{eq:hppt}
\tfrac{1}{2}\leqslant p_{\sf L}(\mathcal{E}^{(L)})\leqslant p_{\sf PPT}(\mathcal{E}^{(L)})
\end{equation}
for each positive integer $L$. 
To show Eq.~\eqref{eq:lptu}, it is enough to show that $p_{\sf PPT}(\mathcal{E}^{(L)})$ converges to $1/2$ as $L$ gets larger.
The following theorem provides a sufficient condition for $\{p_{\sf PPT}(\mathcal{E}^{(L)})\}_{L}$ to be a monotonically decreasing sequence with respect to $L$.
As $\{p_{\sf PPT}(\mathcal{E}^{(L)})\}_{L}$ is bounded below by $1/2$, the monotonicity of the sequence assures that $\{p_{\sf PPT}(\mathcal{E}^{(L)})\}_{L}$ is a convergent sequence.
\begin{theorem}\label{thm:perl}
For an ensemble $\mathcal{E}=\{\eta_{0},\rho_{0};\eta_{1},\rho_{1}\}$ with Condition~\eqref{eq:lpl}, $p_{\sf PPT}(\mathcal{E}^{(L)})$ is monotonically decreasing as $L$ increases.
\end{theorem}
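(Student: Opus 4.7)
The plan is to combine Theorem~\ref{thm:qeqp} with a simple tensor construction that builds an $(L{+}1)$-fold certificate from an $L$-fold one, paying a factor of $\Tr|\Lambda_{\mathcal{E}}|\leqslant 1$ in the trace-norm bound.

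First, I would check that Condition~\eqref{eq:lpl} is inherited by every fold: from Eq.~\eqref{eq:lael} and $\Lambda_{\mathcal{E}}^{\PT}=\Lambda_{\mathcal{E}}$ we get $\Lambda_{\mathcal{E}^{(L)}}^{\PT}=(\Lambda_{\mathcal{E}}^{\otimes L})^{\PT}=(\Lambda_{\mathcal{E}}^{\PT})^{\otimes L}=\Lambda_{\mathcal{E}}^{\otimes L}=\Lambda_{\mathcal{E}^{(L)}}$, so Theorem~\ref{thm:qeqp} applies at every $L$ and yields
\begin{equation*}
p_{\sf PPT}(\mathcal{E}^{(L)})=\min_{H}q(H),\qquad H+H^{\PT}=\Lambda_{\mathcal{E}}^{\otimes L}.
\end{equation*}

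Next, I would take a Hermitian operator $H_{L}$ attaining this minimum and define the candidate $H_{L+1}:=H_{L}\otimes\Lambda_{\mathcal{E}}$, where the added tensor factor acts on the fresh Alice--Bob pair. Since partial transposition distributes across the tensor product and $\Lambda_{\mathcal{E}}^{\PT}=\Lambda_{\mathcal{E}}$,
\begin{equation*}
H_{L+1}+H_{L+1}^{\PT}=(H_{L}+H_{L}^{\PT})\otimes\Lambda_{\mathcal{E}}=\Lambda_{\mathcal{E}}^{\otimes L}\otimes\Lambda_{\mathcal{E}}=\Lambda_{\mathcal{E}^{(L+1)}},
\end{equation*}
so $H_{L+1}$ is admissible for the minimization defining $p_{\sf PPT}(\mathcal{E}^{(L+1)})$.

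To compare trace norms I would use $|A\otimes B|=|A|\otimes|B|$ for Hermitian $A,B$, giving $\Tr|H_{L+1}|=\Tr|H_{L}|\cdot\Tr|\Lambda_{\mathcal{E}}|$. The key inequality $\Tr|\Lambda_{\mathcal{E}}|\leqslant 1$ follows from the triangle inequality for the trace norm applied to $\Lambda_{\mathcal{E}}=\eta_{0}\rho_{0}-\eta_{1}\rho_{1}$, using $\rho_{0},\rho_{1}\succeq 0$ and $\eta_{0}+\eta_{1}=1$. Therefore
\begin{equation*}
p_{\sf PPT}(\mathcal{E}^{(L+1)})\leqslant q(H_{L+1})=\tfrac{1}{2}+\Tr|H_{L}|\Tr|\Lambda_{\mathcal{E}}|\leqslant\tfrac{1}{2}+\Tr|H_{L}|=q(H_{L})=p_{\sf PPT}(\mathcal{E}^{(L)}),
\end{equation*}
which is the desired monotonicity.

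The only subtle step is the multiplicativity of the trace norm on tensor products of Hermitian operators; I would justify it in one line by simultaneously diagonalizing $H_{L}$ and $\Lambda_{\mathcal{E}}$ and reading off the absolute values of the tensor-product eigenvalues. Everything else is bookkeeping with partial transposition and the characterization of $p_{\sf PPT}$ from Theorem~\ref{thm:qeqp}.
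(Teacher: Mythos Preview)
Your proof is correct and follows essentially the same approach as the paper: invoke Theorem~\ref{thm:qeqp} at level $L$ to get an optimal certificate $H_L$, tensor with $\Lambda_{\mathcal{E}}$ to produce a feasible certificate at level $L{+}1$, and use multiplicativity of the trace norm together with $\Tr|\Lambda_{\mathcal{E}}|\leqslant 1$. The only difference is cosmetic: the paper takes the symmetrized candidate $H'=\tfrac{1}{2}H\otimes\Lambda_{\mathcal{E}}+\tfrac{1}{2}\Lambda_{\mathcal{E}}\otimes H$ and then applies the triangle inequality to split the two terms, whereas your asymmetric choice $H_{L+1}=H_L\otimes\Lambda_{\mathcal{E}}$ reaches the same bound directly without that extra step.
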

\begin{proof}
For an arbitrary positive integer $L$, we have
\begin{equation}\label{eq:hhpc}
\Lambda_{\mathcal{E}^{(L)}}^{\PT}
=(\Lambda_{\mathcal{E}}^{\otimes L})^{\PT}
=(\Lambda_{\mathcal{E}}^{\otimes L})
=\Lambda_{\mathcal{E}^{(L)}},
\end{equation}
where the first and last equalities are from Eq.~\eqref{eq:lael} and the second equality is due to Condition~\eqref{eq:lpl}.
From Theorem~\ref{thm:qeqp}, there exists a Hermitian operator $H$ satisfying
\begin{subequations}\label{eq:hcpc0}
\begin{gather}
H+H^{\PT}=\Lambda_{\mathcal{E}^{(L)}},\label{eq:hcpc1}\\[1mm]
p_{\sf PPT}(\mathcal{E}^{(L)})=q(H).\label{eq:hcpc2}
\end{gather}
\end{subequations}

For the Hermitian operator
\begin{equation}\label{eq:hphc}
H'=\frac{1}{2}H\otimes\Lambda_{\mathcal{E}}+\frac{1}{2}\Lambda_{\mathcal{E}}\otimes H,
\end{equation}
we have
\begin{eqnarray}
H'+H'^{\PT}&=&\tfrac{1}{2}(H+H^{\PT})\otimes\Lambda_{\mathcal{E}}
+\tfrac{1}{2}\Lambda_{\mathcal{E}}\otimes(H+H^{\PT})\nonumber\\[1mm]
&=&\tfrac{1}{2}\Lambda_{\mathcal{E}^{(L)}}\otimes\Lambda_{\mathcal{E}}+\tfrac{1}{2}\Lambda_{\mathcal{E}}\otimes\Lambda_{\mathcal{E}^{(L)}}
=\Lambda_{\mathcal{E}^{(L+1)}},\label{eq:hphpt}
\end{eqnarray}
where the first equality is due to Condition~\eqref{eq:lpl}, the second equality is from Eq.~\eqref{eq:hcpc1}, and the last equality is by Eq.~\eqref{eq:lael}.
Moreover, we have
\begin{eqnarray}
p_{\sf PPT}(\mathcal{E}^{(L+1)})&\leqslant&q(H')=\tfrac{1}{2}+\Tr|H'|\nonumber\\[1mm]
&=&\tfrac{1}{2}+\Tr|\tfrac{1}{2}H\otimes\Lambda_{\mathcal{E}}+\tfrac{1}{2}\Lambda_{\mathcal{E}}\otimes H|\nonumber\\[1mm]
&\leqslant&\tfrac{1}{2}+\Tr|\tfrac{1}{2}H\otimes\Lambda_{\mathcal{E}}|+\Tr|\tfrac{1}{2}\Lambda_{\mathcal{E}}\otimes H|\nonumber\\[1mm]
&=&\tfrac{1}{2}+\Tr|H|\Tr|\Lambda_{\mathcal{E}}|\leqslant\tfrac{1}{2}+\Tr|H|\nonumber\\[1mm]
&=&q(H)=p_{\sf PPT}(\mathcal{E}^{(L)}),\label{eq:thlc}
\end{eqnarray}
where the first inequality is from Eq.~\eqref{eq:hphpt} and Theorem~\ref{thm:upbd}, the second inequality is by the triangle inequality of trace norm, the third equality is due to $\Tr|A\otimes B|=\Tr|A|\Tr|B|$ for any Hermitian operators $A$ and $B$, the last inequality is from the definition of $\Lambda_{\mathcal{E}}$ in Eq.~\eqref{eq:spc} together with the triangle inequality of trace norm, and the last equality is due to Eq.~\eqref{eq:hcpc2}.
Since the choice of $L$ can be arbitrary, $p_{\sf PPT}(\mathcal{E}^{(L)})$ is monotonically decreasing as a function of $L$.
\end{proof}
\begin{corollary}\label{cor:pell}
For an ensemble $\mathcal{E}=\{\eta_{0},\rho_{0};\eta_{1},\rho_{1}\}$ with Condition~\eqref{eq:lpl}, if a subsequence of $\{p_{\sf PPT}(\mathcal{E}^{(L)})\}_{L}$ converges to $1/2$, then Eq.~\eqref{eq:lptu} is satisfied.
\end{corollary}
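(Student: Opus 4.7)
The plan is to combine the monotonicity from Theorem~\ref{thm:perl} with the lower bound from Inequality~\eqref{eq:hppt} and use a standard fact about convergence of monotone sequences. Condition~\eqref{eq:lpl} is assumed, so Theorem~\ref{thm:perl} applies and tells us that $\{p_{\sf PPT}(\mathcal{E}^{(L)})\}_{L}$ is monotonically decreasing in $L$. Moreover, Inequality~\eqref{eq:hppt} gives $p_{\sf PPT}(\mathcal{E}^{(L)})\geqslant 1/2$ for every $L$, so the sequence is bounded below.

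A monotonically decreasing sequence that is bounded below is convergent, and its limit must coincide with the limit of any convergent subsequence. Hence the assumption that some subsequence converges to $1/2$ forces the entire sequence $\{p_{\sf PPT}(\mathcal{E}^{(L)})\}_{L}$ to converge to $1/2$.

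Finally, Inequality~\eqref{eq:hppt} reads $\tfrac{1}{2}\leqslant p_{\sf L}(\mathcal{E}^{(L)})\leqslant p_{\sf PPT}(\mathcal{E}^{(L)})$ for every $L$. Applying the squeeze theorem as $L\to\infty$ yields $\lim_{L\to\infty}p_{\sf L}(\mathcal{E}^{(L)})=\tfrac{1}{2}$, which is exactly Eq.~\eqref{eq:lptu}.

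There is no real obstacle here: the entire content is that Theorem~\ref{thm:perl} promotes a subsequential statement to a full-sequence statement, after which the bound $p_{\sf L}\leqslant p_{\sf PPT}$ transfers convergence from the PPT quantity to the LOCC quantity. The proof is essentially a three-line observation built on the preceding theorem.
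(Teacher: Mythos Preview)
Your argument is correct and is exactly the reasoning the paper intends: the corollary is stated without proof because the surrounding text already observes that Inequality~\eqref{eq:hppt} reduces Eq.~\eqref{eq:lptu} to convergence of $p_{\sf PPT}(\mathcal{E}^{(L)})$ to $1/2$, and that Theorem~\ref{thm:perl} together with the lower bound $1/2$ makes the full sequence convergent, so any subsequential limit is the limit.
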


\section{Quantum data-hiding scheme using separable state ensemble}\label{sec:ex}
In this section, we use Corollary~\ref{cor:pell} to provide a quantum data-hiding scheme using separable state ensemble.
To obtain a sufficient condition for the convergence of $\{p_{\sf PPT}(\mathcal{E}^{(L)})\}_{L}$ to $1/2$, we consider the subsequence $\{p_{\sf PPT}(\mathcal{E}^{(3^{m})})\}_{m}$ of $\{p_{\sf PPT}(\mathcal{E}^{(L)})\}_{L}$, and obtain a condition for the convergence of $\{p_{\sf PPT}(\mathcal{E}^{(3^{m})})\}_{m}$ to $1/2$.
More precisely, we provide a sufficient condition for an upper bound of $p_{\sf PPT}(\mathcal{E}^{(3^{m})})$ to converge to $1/2$ as $m$ gets larger.

The following lemma provides a method to obtain an upper bound of $p_{\sf PPT}(\mathcal{E}^{(3)})$ from a Hermitian operator $H$ satisfying Eq.~\eqref{eq:hsc}.
\begin{lemma}\label{lem:hufg}
For an ensemble $\mathcal{E}=\{\eta_{0},\rho_{0};\eta_{1},\rho_{1}\}$, if there exists a Hermitian operator $H$ satisfying Eq.~\eqref{eq:hsc} and 
\begin{equation}\label{eq:fhpo}
\Tr|H|+\Tr|H^{\PT}|\leqslant 1,
\end{equation}
then there exists a Hermitian operator $H'$ satisfying
\begin{subequations}
\begin{gather}
H'+H'^{\PT}=\Lambda_{\mathcal{E}}^{\otimes 3},\label{eq:hplt}\\[1mm]
\Tr|H'|\leqslant f(\Tr|H|),\label{eq:thf1}\\[1mm]
\Tr|H'|+\Tr|H'^{\PT}|\leqslant 1\label{eq:fhf}
\end{gather}
\end{subequations}
where $f$ is the polynomial function defined as
\begin{equation}\label{eq:dfg}
f(x)=x^{2}(3-2x).
\end{equation}
\end{lemma}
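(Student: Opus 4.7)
The plan is to construct $H'$ as a carefully chosen sum of tensor-monomials in $H$ and $H^{\PT}$. Expanding $\Lambda_{\mathcal{E}}^{\otimes 3} = (H+H^{\PT})^{\otimes 3}$ gives eight terms, which pair up under the involution that exchanges $H$ and $H^{\PT}$ in every slot. From each of the four pairs I would pick the ``$H$-heavy'' representative (the one with more factors of $H$ than of $H^{\PT}$) and define
\begin{equation*}
H' := H\otimes H\otimes H + H\otimes H\otimes H^{\PT} + H\otimes H^{\PT}\otimes H + H^{\PT}\otimes H\otimes H.
\end{equation*}
Then $H'$ is Hermitian because $H$ and $H^{\PT}$ are, and applying partial transposition to every bipartite factor sends each chosen monomial to its complementary partner, so $H'+H'^{\PT}$ recovers all eight monomials: $H'+H'^{\PT} = (H+H^{\PT})^{\otimes 3} = \Lambda_{\mathcal{E}}^{\otimes 3}$, which is Eq.~\eqref{eq:hplt}.

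The next step is to bound the trace norms. Writing $x = \Tr|H|$ and $y = \Tr|H^{\PT}|$, the triangle inequality for $\Tr|\cdot|$ together with $\Tr|A\otimes B| = \Tr|A|\,\Tr|B|$ gives $\Tr|H'| \leqslant x^3 + 3x^2 y = x^2(x+3y)$. The hypothesis $x+y \leqslant 1$ from Eq.~\eqref{eq:fhpo} yields $y \leqslant 1-x$, hence $x+3y \leqslant 3-2x$ and $\Tr|H'| \leqslant x^2(3-2x) = f(x)$, which is Eq.~\eqref{eq:thf1}. The same estimate applied to $H'^{\PT}$, whose expansion is obtained from that of $H'$ by swapping $H$ with $H^{\PT}$, yields $\Tr|H'^{\PT}| \leqslant f(y)$.

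All that remains is the elementary inequality $f(x) + f(y) \leqslant 1$ whenever $x,y \geqslant 0$ and $x + y \leqslant 1$; this is the one step requiring a brief calculation. Setting $s = x+y$ and $p = xy$ and substituting $x^2 + y^2 = s^2 - 2p$ and $x^3 + y^3 = s^3 - 3sp$ gives
\begin{equation*}
f(x) + f(y) = (3s^2 - 2s^3) - 6p(1-s) = f(s) - 6p(1-s),
\end{equation*}
which is at most $f(s)$ since $p \geqslant 0$ and $s \leqslant 1$; since $f$ is nondecreasing on $[0,1]$ with $f(1) = 1$, we obtain $f(x) + f(y) \leqslant f(s) \leqslant 1$, which combined with the previous step delivers Eq.~\eqref{eq:fhf}. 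The only genuinely creative step is identifying the right $H'$; once that construction is fixed, the rest of the argument is routine.
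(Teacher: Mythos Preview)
Your proof is correct and follows essentially the same route as the paper: the construction of $H'$ is identical, and the verification of all three conditions proceeds by the same triangle-inequality and multiplicativity arguments. The only cosmetic differences are that the paper first groups $H\otimes H\otimes H + H^{\PT}\otimes H\otimes H = \Lambda_{\mathcal{E}}\otimes H\otimes H$ and uses $\Tr|\Lambda_{\mathcal{E}}|\leqslant 1$ before applying the triangle inequality, and that for Eq.~\eqref{eq:fhf} it invokes the identity $f(x)+f(1-x)=1$ together with monotonicity of $f$ rather than your symmetric-function computation; both variants lead to the same bounds.
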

\noindent Before we prove Lemma~\ref{lem:hufg}, we first remark that Lemma~\ref{lem:hufg} directly leads us to an upper bound of $p_{\sf PPT}(\mathcal{E}^{(3)})$ as
\begin{equation}\label{eq:upbw}
p_{\sf PPT}(\mathcal{E}^{(3)})\leqslant
q(H')\leqslant \tfrac{1}{2}+ f(\Tr|H|),
\end{equation}
where the first inequality is from Theorem~\ref{thm:upbd} and Eq.~\eqref{eq:lael}, and the last inequality is due to Inequalities~\eqref{eq:thf1}.
\begin{proof}
Let us consider
\begin{equation}\label{eq:hhhp}
H'=H\otimes H\otimes H+H^{\PT}\otimes H\otimes H
+H\otimes H^{\PT}\otimes H+H\otimes H\otimes H^{\PT}.
\end{equation}
The Hermitian operator $H'$ in Eq.~\eqref{eq:hhhp} satisfies Eq.~\eqref{eq:hplt} because
\begin{equation}\label{eq:hhht}
H'+H'^{\PT}=(H+H^{\PT})\otimes(H+H^{\PT})
\otimes(H+H^{\PT})
=\Lambda_{\mathcal{E}}^{\otimes3},
\end{equation}
where the last equality is due to Eq.~\eqref{eq:hsc}.

Inequality~\eqref{eq:thf1} is satisfied because
\begin{eqnarray}\label{eq:tnhp}
\Tr|H'|&=&\Tr|\,(H+H^{\PT})\otimes H\otimes H
+H\otimes H^{\PT}\otimes H+H\otimes H\otimes H^{\PT}\,|\nonumber\\[1mm]
&=&\Tr|\,\Lambda_{\mathcal{E}}\otimes H\otimes H +H\otimes H^{\PT}\otimes H
+H\otimes H\otimes H^{\PT}\,|\nonumber\\[1mm]
&\leqslant&\Tr|\,\Lambda_{\mathcal{E}}\otimes H\otimes H\,|
+\Tr|\,H\otimes H^{\PT}\otimes H\,|
+\Tr|\,H\otimes H\otimes H^{\PT}\,|\nonumber\\[1mm]
&=&(\Tr|\Lambda_{\mathcal{E}}|)(\Tr|H|)^{2}+2(\Tr|H|)^{2}\Tr|H^{\PT}|\nonumber\\[1mm]
&\leqslant&(\Tr|H|)^{2}(1+2\Tr|H^{\PT}|)\nonumber\\[1mm]
&\leqslant&(\Tr|H|)^{2}(3-2\Tr|H|)\nonumber\\[1mm]
&=&f(\Tr|H|),
\end{eqnarray}
where the second equality is due to Eq.~\eqref{eq:hsc}, the first inequality is from the triangle inequality of trace norm, the third equality is by $\Tr|A\otimes B|=\Tr|A|\Tr|B|$ for any Hermitian operators $A$ and $B$, the second inequality is by the definition of $\Lambda_{\mathcal{E}}$ in Eq.~\eqref{eq:spc} together with the triangle inequality of trace norm, the last inequality is due to Eq.~\eqref{eq:fhpo}, and the last equality follows from the definition of $f$ in Eq.~\eqref{eq:dfg}.
Analogously, we also have
\begin{equation}\label{eq:awcs}
\Tr|H'^{\PT}|\leqslant f(\Tr|H^{\PT}|).
\end{equation}

Moreover, Inequality~\eqref{eq:fhf} holds because
\begin{equation}\label{eq:thth}
\Tr|H'|+\Tr|H'^{\PT}|
\leqslant f(\Tr|H|)+f(\Tr|H^{\PT}|)\\
\leqslant f(\Tr|H|)+f(1-\Tr|H|)=1,
\end{equation}
where the first inequality is by Inequalities~\eqref{eq:tnhp} and \eqref{eq:awcs}, the last inequality is from Inequality~\eqref{eq:fhpo} together with $f(x)<f(y)$ for any real numbers $x,y$ with $0\leqslant x<y\leqslant 1$, and the equality is due to $f(x)+f(1-x)=1$ for any real number $x$.
\end{proof}

Now, we iteratively use Lemma~\ref{lem:hufg} to obtain the following theorem which provides an upper bound of $p_{\sf PPT}(\mathcal{E}^{(3^{m})})$ for each positive integer $m$.
\begin{theorem}\label{thm:hptm}
For an ensemble $\mathcal{E}=\{\eta_{0},\rho_{0};\eta_{1},\rho_{1}\}$, if there exist a Hermitian operator $H$ satisfying Eq.~\eqref{eq:hsc} and Inequality~\eqref{eq:fhpo}, then 
\begin{equation}\label{eq:upuu}
p_{\sf PPT}(\mathcal{E}^{(3^{m})})\leqslant\tfrac{1}{2}+f^{[m]}(\Tr|H|)
\end{equation}
for any non-negative integer $m$, where $f^{[m]}$ is the $m$th iterate of the polynomial function $f$ defined in Eq.~\eqref{eq:dfg}.
\end{theorem}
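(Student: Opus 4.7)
The plan is a straightforward induction on $m$, iteratively feeding Lemma~\ref{lem:hufg} to itself. I would maintain the following invariant at stage $m$: there exists a Hermitian operator $H_{m}$ with $H_{m}+H_{m}^{\PT}=\Lambda_{\mathcal{E}^{(3^{m})}}$, with $\Tr|H_{m}|\leqslant f^{[m]}(\Tr|H|)$, and with $\Tr|H_{m}|+\Tr|H_{m}^{\PT}|\leqslant 1$. For $m=0$ this is just the hypothesis of the theorem, taking $H_{0}=H$ and noting that $\mathcal{E}^{(3^{0})}=\mathcal{E}^{(1)}=\mathcal{E}$ and $f^{[0]}$ is the identity.

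For the inductive step, I would apply Lemma~\ref{lem:hufg} to the ensemble $\mathcal{E}^{(3^{m})}$, which is legitimate since $H_{m}$ meets both Eq.~\eqref{eq:hsc} (with $\mathcal{E}$ replaced by $\mathcal{E}^{(3^{m})}$) and Inequality~\eqref{eq:fhpo}. The lemma hands back a Hermitian $H_{m+1}$ obeying
\[
H_{m+1}+H_{m+1}^{\PT}=\Lambda_{\mathcal{E}^{(3^{m})}}^{\otimes 3}=\Lambda_{\mathcal{E}}^{\otimes 3^{m+1}}=\Lambda_{\mathcal{E}^{(3^{m+1})}},
\]
where the middle identities come from iterating Eq.~\eqref{eq:lael}, together with $\Tr|H_{m+1}|\leqslant f(\Tr|H_{m}|)$ and $\Tr|H_{m+1}|+\Tr|H_{m+1}^{\PT}|\leqslant 1$. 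Using that $f$ is non-decreasing on $[0,1]$ (immediate from $f'(x)=6x(1-x)\geqslant 0$) and that $\Tr|H_{m}|\in[0,1]$ by the sum-of-traces bound, I would chain the recursive estimate into $\Tr|H_{m+1}|\leqslant f(f^{[m]}(\Tr|H|))=f^{[m+1]}(\Tr|H|)$, so the invariant propagates from $m$ to $m+1$.

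Finally, Theorem~\ref{thm:upbd} applied to $H_{m}$ closes the argument: $p_{\sf PPT}(\mathcal{E}^{(3^{m})})\leqslant q(H_{m})=\tfrac{1}{2}+\Tr|H_{m}|\leqslant\tfrac{1}{2}+f^{[m]}(\Tr|H|)$. The whole argument is more bookkeeping than substance — the real content sits in Lemma~\ref{lem:hufg} — and the one point worth double-checking is that Inequality~\eqref{eq:fhf} in the lemma's conclusion is exactly what is needed to re-enter the lemma's hypothesis at the next level, so that the induction does not stall after a single iteration.
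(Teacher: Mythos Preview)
Your proposal is correct and is essentially the same proof as the paper's: both arguments proceed by induction on $m$, maintaining the three-part invariant (the $H$--$H^{\PT}$ decomposition of $\Lambda_{\mathcal{E}^{(3^{m})}}$, the trace-norm bound $\Tr|H_{m}|\leqslant f^{[m]}(\Tr|H|)$, and the sum-of-traces bound~\eqref{eq:fhpo}), re-enter Lemma~\ref{lem:hufg} at each step, use the monotonicity of $f$ on $[0,1]$ to propagate the iterate estimate, and finish by invoking Theorem~\ref{thm:upbd}. Your observation that Inequality~\eqref{eq:fhf} is exactly what re-primes the lemma's hypothesis is precisely the point the paper exploits in its inductive step.
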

\begin{proof}
For any non-negative integer $m$, we show that there exists a Hermitian operator $H'$ satisfying
\begin{subequations}\label{eq:flis}
\begin{gather}
H'+H'^{\PT}=\Lambda_{\mathcal{E}}^{\otimes 3^{m}},\label{eq:fltn}\\[1mm]
\Tr|H'|\leqslant f^{[m]}(\Tr|H|),\label{eq:fiht}\\[1mm]
\Tr|H'|+\Tr|H'^{\PT}|\leqslant 1,\label{eq:hpph}
\end{gather}
\end{subequations}
therefore Theorem~\ref{thm:upbd} and Eq.~\eqref{eq:lael} lead us to Inequality~\eqref{eq:upuu}.

For the case when $m=0$, the Hermitian operator $H'=H$ clearly satisfies Conditions~\eqref{eq:flis}.
For the case when $m>0$, we use the mathematical induction on $m$; we first assume that the theorem is true for any non-negative integer less than or equal to $m-1$, that is, the existence of a Hermitian operator $H^{\circ}$ satisfying
\begin{subequations}\label{eq:ftcc}
\begin{gather}
H^{\circ}+H^{\circ\PT}=\Lambda_{\mathcal{E}}^{\otimes3^{m-1}},\label{eq:fchc}\\[1mm]
\Tr|H^{\circ}|\leqslant f^{[m-1]}(\Tr|H|),\label{eq:fctc}\\[1mm]
\Tr|H^{\circ}|+\Tr|H^{\circ\PT}|\leqslant 1,\label{eq:hpf}
\end{gather}
\end{subequations}
and show the validity of Conditions~\eqref{eq:flis}.

Since $\Lambda_{\mathcal{E}}^{\otimes 3^{m-1}}=\Lambda_{\mathcal{E}^{(3^{m-1})}}$ due to Eq.~\eqref{eq:lael}, it follows from Lemma~\ref{lem:hufg} and the induction hypothesis in \eqref{eq:ftcc} that there exists a Hermitian operator $H'$ satisfying Condition~\eqref{eq:hpph} and
\begin{subequations}\label{eq:fcca}
\begin{gather}
H'+H'^{\PT}=\Lambda_{\mathcal{E}^{(3^{m-1})}}^{\otimes 3},\label{eq:fhca}\\[1mm]
\Tr|H'|\leqslant f(\Tr|H^{\circ}|).\label{eq:fclt}
\end{gather}
\end{subequations}
Equation~\eqref{eq:fhca} means Condition~\eqref{eq:fltn} because
\begin{equation}\label{eq:lmlm}
\Lambda_{\mathcal{E}^{(3^{m-1})}}^{\otimes 3}=(\Lambda_{\mathcal{E}}^{\otimes 3^{m-1}})^{\otimes 3}=\Lambda_{\mathcal{E}}^{\otimes 3^{m}},
\end{equation}
where the first equality is from Eq.~\eqref{eq:lael}.
Since $0\leqslant f(x)< f(y)\leqslant 1$ for any real numbers $x,y$ with $0\leqslant x<y\leqslant 1$, we can see from Inequalities~\eqref{eq:fhpo} and \eqref{eq:fctc} that
\begin{equation}\label{eq:fthc}
f(\Tr|H^{\circ}|)\leqslant f(f^{[m-1]}(\Tr|H|))=f^{[m]}(\Tr|H|).
\end{equation}
Inequalities~\eqref{eq:fclt} and \eqref{eq:fthc} lead us to Condition~\eqref{eq:fiht}.
Thus, our theorem is true.
\end{proof}

By using Theorem~\ref{thm:hptm}, we provides a sufficient condition for the convergence of $\{p_{\sf PPT}(\mathcal{E}^{(3^{m})})\}_{m}$ to $1/2$.
\begin{theorem}\label{thm:ubu}
For an ensemble $\mathcal{E}=\{\eta_{0},\rho_{0};\eta_{1},\rho_{1}\}$, if there exists a Hermitian operator $H$ satisfying Eq.~\eqref{eq:hsc} and Inequality~\eqref{eq:fhpo} together with 
\begin{equation}\label{eq:thfc}
\Tr|H|<\tfrac{1}{2}, 
\end{equation}
then $p_{\sf PPT}(\mathcal{E}^{(3^{m})})$ converges to $1/2$ exponentially fast with respect to $m$.
\end{theorem}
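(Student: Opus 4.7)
The plan is to combine the explicit bound in Theorem~\ref{thm:hptm} with a quantitative contraction estimate for the iteration of the polynomial $f(x)=x^{2}(3-2x)$. By Theorem~\ref{thm:hptm}, under the hypotheses we have
\begin{equation}
p_{\sf PPT}(\mathcal{E}^{(3^{m})})-\tfrac{1}{2}\leqslant f^{[m]}(\Tr|H|)
\end{equation}
for every non-negative integer $m$, so the whole task reduces to showing that $f^{[m]}(x_{0})\to 0$ exponentially when $x_{0}:=\Tr|H|<\tfrac{1}{2}$.

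First I would record two elementary properties of $f$ on $[0,1]$. Since $f'(x)=6x(1-x)\geqslant0$, the function $f$ is non-decreasing. Moreover, one has the factorization
\begin{equation}
f(x)=x\cdot g(x),\qquad g(x):=x(3-2x),
\end{equation}
and $g$ is strictly increasing on $[0,\tfrac{3}{4}]$ with $g(\tfrac{1}{2})=1$. Consequently, for any $x_{0}<\tfrac{1}{2}$ the constant
\begin{equation}
r:=g(x_{0})=x_{0}(3-2x_{0})
\end{equation}
satisfies $0\leqslant r<1$.

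Next, I would run a short induction on $m$ to prove the bound
\begin{equation}
f^{[m]}(x_{0})\leqslant r^{m}\,x_{0}.
\end{equation}
The base case $m=0$ is trivial. For the inductive step, assume $y:=f^{[m]}(x_{0})\leqslant r^{m}x_{0}\leqslant x_{0}$. Because $g$ is non-decreasing on $[0,\tfrac{1}{2}]$,
\begin{equation}
f^{[m+1]}(x_{0})=f(y)=y\,g(y)\leqslant y\,g(x_{0})=r\,y\leqslant r^{m+1}x_{0},
\end{equation}
which closes the induction. Feeding this back into the bound from Theorem~\ref{thm:hptm} yields
\begin{equation}
p_{\sf PPT}(\mathcal{E}^{(3^{m})})-\tfrac{1}{2}\leqslant r^{m}\,\Tr|H|,
\end{equation}
with $r<1$, which is exactly exponential convergence in $m$.

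There is no real obstacle here; the only delicate point is verifying the monotonicity of $g$ on the relevant interval $[0,x_{0}]\subseteq[0,\tfrac{1}{2}]$ so that the inductive step goes through with the uniform ratio $r=g(x_{0})$. Note that we do not even need to invoke Inequality~\eqref{eq:fhpo} directly at this stage, since the exponential decay is driven purely by the condition $\Tr|H|<\tfrac{1}{2}$; Inequality~\eqref{eq:fhpo} is already absorbed into the application of Theorem~\ref{thm:hptm}.
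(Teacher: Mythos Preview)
Your proposal is correct and follows essentially the same approach as the paper: both invoke Theorem~\ref{thm:hptm} and then show that $f^{[m]}(x_{0})\leqslant x_{0}\,r^{m}$ with $r=g(x_{0})=f(x_{0})/x_{0}<1$, using the monotonicity of $g(x)=x(3-2x)$ on $[0,\tfrac{1}{2}]$. The paper packages the same estimate as a telescoping product $\prod_{k}f(f^{[k]}(x_{0}))/f^{[k]}(x_{0})$ rather than an induction, but the content is identical; your observation that Inequality~\eqref{eq:fhpo} enters only through Theorem~\ref{thm:hptm} is also accurate.
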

\begin{proof}
Due to Theorem~\ref{thm:hptm} and Inequality~\eqref{eq:upuu}, it is enough to show that $f^{[m]}(\Tr|H|)$ converges to $0$ exponentially fast with respect to $m$.
When $\Tr|H|=0$, our statement is clearly true due to $f(0)=0$.

Let us suppose $\Tr|H|>0$.
Since the function $f$ satisfies $0<f(x)/x<f(y)/y<1$ for any real numbers $x,y$ with $0< x<y<1/2$, we can see from Inequalities~\eqref{eq:fhpo} and \eqref{eq:thfc} that
\begin{equation}\label{eq:uuk}
0<\frac{f(f^{[m]}(\Tr|H|))}{f^{[m]}(\Tr|H|)}<\frac{f(f^{[m-1]}(\Tr|H|))}{f^{[m-1]}(\Tr|H|)}<1
\end{equation}
for any positive integer $m$.
From Inequality~\eqref{eq:uuk}, we have
\begin{equation}\label{eq:umir}
f^{[m]}(\Tr|H|)=\Tr|H|\prod_{k=0}^{m-1}\frac{f(f^{[k]}(\Tr|H|))}{f^{[k]}(\Tr|H|)}
<\Tr|H|\Bigg[\frac{f(\Tr|H|)}{\Tr|H|}\Bigg]^{m}
\end{equation}
for any positive integer $m$.
Thus, $f^{[m]}(\Tr|H|)$ converges to $0$ exponentially fast with respect to $m$.
\end{proof}

\begin{example}\label{ex:tts}
Let us consider the $3\otimes3$ orthogonal separable state ensemble $\mathcal{E}=\{\eta_{0},\rho_{0};\eta_{1},\rho_{1}\}$ consisting of
\begin{align}\label{eq:tte}
\eta_{0}=\frac{1}{2},~
\rho_{0}=&\frac{1}{4}\sum_{i=0}^{3}\ket{\alpha_{i}}\!\bra{\alpha_{i}}\otimes\ket{\alpha_{i}}\!\bra{\alpha_{i}},\nonumber\\
\eta_{1}=\frac{1}{2},~
\rho_{1}=&\frac{1}{6}\sum_{i=0}^{2}\big(\ket{\beta_{i}^{+}}\!\bra{\beta_{i}^{+}}\otimes\ket{\beta_{i}^{-}}\!\bra{\beta_{i}^{-}}
+\ket{\beta_{i}^{-}}\!\bra{\beta_{i}^{-}}\otimes\ket{\beta_{i}^{+}}\!\bra{\beta_{i}^{+}}\big)
\end{align}
where
\begin{equation}\label{eq:fso}
\begin{array}{ll}
\ket{\alpha_{0}}=\frac{1}{\sqrt{3}}(\ket{0}+\ket{1}+\ket{2}),
&\ket{\beta_{0}^{\pm}}=\frac{1}{\sqrt{2}}(\ket{0}\pm\ket{1}),\\[2mm]
\ket{\alpha_{1}}=\frac{1}{\sqrt{3}}(\ket{0}-\ket{1}-\ket{2}),
&\ket{\beta_{1}^{\pm}}=\frac{1}{\sqrt{2}}(\ket{1}\pm\ket{2}),\\[2mm]
\ket{\alpha_{2}}=\frac{1}{\sqrt{3}}(\ket{0}-\ket{1}+\ket{2}),
&\ket{\beta_{2}^{\pm}}=\frac{1}{\sqrt{2}}(\ket{2}\pm\ket{0}),\\[2mm]
\ket{\alpha_{3}}=\frac{1}{\sqrt{3}}(\ket{0}+\ket{1}-\ket{2}).&
\end{array}
\end{equation}
Note that the states $\rho_{0}$ and $\rho_{1}$ are orthogonal because either $\braket{\alpha_{i}|\beta_{j}^{+}}$ or $\braket{\alpha_{i}|\beta_{j}^{-}}$ is zero for all $i=0,1,2,3$ and all $j=0,1,2$.
\end{example}
For the Hermitian operator
\begin{equation}\label{eq:heo}
H=\frac{1}{6}\ket{\Gamma_{0}}\!\bra{\Gamma_{0}}
-\frac{5}{48}\sum_{i=1}^{2}\ket{\Gamma_{i}}\!\bra{\Gamma_{i}}
+\frac{1}{72}\sum_{i=0}^{2}\ket{\Gamma_{i}^{+}}\!\bra{\Gamma_{i}^{+}}
\end{equation}
where $\{\ket{\Gamma_{i}},\ket{\Gamma_{i}^{+}},\ket{\Gamma_{i}^{-}}\}_{i=0}^{2}$ is an orthonormal basis of $\mathbb{C}^{3}\otimes\mathbb{C}^{3}$,
\begin{equation}\label{eq:xis}
\begin{array}{ll}
\ket{\Gamma_{0}}=\frac{1}{\sqrt{3}}(\ket{00}+\ket{11}+\ket{22}),
&
\ket{\Gamma_{0}^{\pm}}=\frac{1}{\sqrt{2}}(\ket{12}\pm\ket{21}),
\\[2mm]
\ket{\Gamma_{1}}=\frac{1}{\sqrt{3}}(\ket{00}+e^{\frac{\mathrm{i}2\pi}{3}}\ket{11}+e^{\frac{\mathrm{i}4\pi}{3}}\ket{22}),
&
\ket{\Gamma_{1}^{\pm}}=\frac{1}{\sqrt{2}}(\ket{20}\pm\ket{02}),
\\[2mm]
\ket{\Gamma_{2}}=\frac{1}{\sqrt{3}}(\ket{00}+e^{\frac{\mathrm{i}4\pi}{3}}\ket{11}+e^{\frac{\mathrm{i}2\pi}{3}}\ket{22}),
&
\ket{\Gamma_{2}^{\pm}}=\frac{1}{\sqrt{2}}(\ket{01}\pm\ket{10}),
\end{array}
\end{equation}
it is straightforward to check Eq.~\eqref{eq:hsc} and its partial transposition
\begin{equation}\label{eq:pthe}
H^{\PT}=
-\frac{1}{48}\sum_{i=1}^{2}\ket{\Gamma_{i}}\!\bra{\Gamma_{i}}
+\sum_{i=0}^{2}\Big(\frac{7}{72}\ket{\Gamma_{i}^{+}}\!\bra{\Gamma_{i}^{+}}
-\frac{1}{12}\ket{\Gamma_{i}^{-}}\!\bra{\Gamma_{i}^{-}}\Big).
\end{equation}

From Eqs.~\eqref{eq:heo} and \eqref{eq:pthe}, we can easily see that
\begin{equation}\label{eq:tfst}
\Tr|H|=\frac{5}{12},~~\Tr|H^{\PT}|=\frac{7}{12}.
\end{equation}
Since Eq.~\eqref{eq:tfst} and Theorem~\ref{thm:ubu} lead us to the convergence of $\{p_{\sf PPT}(\mathcal{E}^{(3^{m})})\}_{m}$ to $1/2$, it follows from Corollary~\ref{cor:pell} that Eq.~\eqref{eq:lptu} holds.
Thus, the orthogonal separable state ensemble $\mathcal{E}$ in Example~\ref{ex:tts} can be used to construct a data-hiding scheme that hides one classical bit.
The details of a quantum data-hiding scheme are given in Sec.~\ref{sec:mfe}.

Example~\ref{ex:tts} can be generalized into higher-dimensional bipartite quantum systems.
For an odd number $d\geqslant3$, the following example provides a two-qu$d$it orthogonal separable state ensemble that can be used to construct a quantum data-hiding scheme.
\begin{example}\label{ex:ddex}
For an odd number $d\geqslant3$, let us consider the $d\otimes d$ orthogonal separable state ensemble $\mathcal{E}=\{\eta_{0},\rho_{0};\eta_{1},\rho_{1}\}$ consisting of
\begin{align}
\eta_{0}=\frac{1}{2},~
\rho_{0}=&\frac{1}{2^{d-1}}
\hskip -5mm
\sum_{\substack{S\subseteq\{0,1,\ldots,d-1\}\\ |S|<d/2}}
\hskip -5mm
\ket{\phi_{S}}\!\bra{\phi_{S}}\otimes\ket{\phi_{S}}\!\bra{\phi_{S}},\nonumber\\
\eta_{1}=\frac{1}{2},~
\rho_{1}=&\frac{1}{d(d-1)}\sum_{\substack{i,j=0\\i<j}}^{d-1}\big(\ket{\psi_{ij}^{+}}\!\bra{\psi_{ij}^{+}}\otimes\ket{\psi_{ij}^{-}}\!\bra{\psi_{ij}^{-}}
+\ket{\psi_{ij}^{-}}\!\bra{\psi_{ij}^{-}}\otimes\ket{\psi_{ij}^{+}}\!\bra{\psi_{ij}^{+}}\big)\label{eq:dde}
\end{align}
where
\begin{equation}\label{eq:ddf}
\ket{\phi_{S}}=\frac{1}{\sqrt{d}}\sum_{i=0}^{d-1}\ket{i}
-\frac{2}{\sqrt{d}}\sum_{j\in S}\ket{j},~~
\ket{\psi_{ij}^{\pm}}=\frac{1}{\sqrt{2}}(\ket{i}\pm\ket{j}).
\end{equation}
Note that the states $\rho_{0}$ and $\rho_{1}$ are orthogonal since either $\braket{\phi_{S}|\psi_{ij}^{+}}$ or $\braket{\phi_{S}|\psi_{ij}^{-}}$ is zero for all $S\subseteq\{0,1,\ldots,d-1\}$ with $|S|<d/2$ and all $i,j=0,1,\ldots,d-1$ with $i<j$.
\end{example}

Before we provide a Hermitian operator $H$ satisfying Eq.~\eqref{eq:hsc}, we show that
\begin{subequations}\label{eq:rrt}
\begin{eqnarray}
\eta_{0}\rho_{0}&=&\frac{1}{2d}\Phi
+\frac{1}{d^{2}}\Psi^{+},\label{eq:rsrw}\\
\eta_{1}\rho_{1}&=&\frac{1}{4(d-1)}(\Pi-\Phi)
+\frac{1}{2d(d-1)}\Psi^{-}\label{eq:rsrt}
\end{eqnarray}
\end{subequations}
where 
\begin{equation}\label{eq:dxis}
\Pi=\sum_{i=0}^{d-1}\ket{ii}\!\bra{ii},~~
\Phi=\frac{1}{d}\sum_{i,j=0}^{d-1}\ket{ii}\!\bra{jj},~~
\Psi^{\pm}=\frac{1}{2}\sum_{\substack{i,j=0\\i<j}}^{d-1}(\ket{ij}\pm\ket{ji})(\bra{ij}\pm\bra{ji}).
\end{equation}
We can show Eq.~\eqref{eq:rsrt} by tedious but straightforward calculations.
On the other hand, showing Eq.~\eqref{eq:rsrw} requires a tricky calculation using binomial coefficients
\begin{equation}\label{eq:cnk}
\binom{n}{k}=\frac{n!}{k!(n-k)!}.
\end{equation}

From the definitions of $\eta_{0}$ and $\rho_{0}$ in Eq.~\eqref{eq:dde}, we can see that
\begin{eqnarray}\label{eq:rosr}
\eta_{0}\rho_{0}
&=&\frac{1}{2^{d}}\sum_{i,j,i',j'=0}^{d-1}\ket{ij}\!\bra{i'j'}
\sum_{\substack{S\subseteq\{0,1,\ldots,d-1\}\\ |S|<d/2}}
\braket{i|\phi_{S}}\!\braket{j|\phi_{S}}\braket{\phi_{S}|i'}\!\braket{\phi_{S}|j'}
\nonumber\\
&=&\frac{1}{2^{d}d^{2}}\sum_{i,j,i',j'=0}^{d-1}\ket{ij}\!\bra{i'j'}
\sum_{\substack{S\subseteq\{0,1,\ldots,d-1\}\\ |S|<d/2}}
\mu_{S}(i)\mu_{S}(j)\mu_{S}(i')\mu_{S}(j')
\nonumber\\
&=&\frac{1}{2^{d}d^{2}}\sum_{i=0}^{d-1}\Bigg[\ket{ii}\!\bra{ii}
+\sum_{\substack{j=0\\ j\neq i}}^{d-1}(\ket{ii}\!\bra{jj}
+\ket{ij}\!\bra{ij}+\ket{ij}\!\bra{ji})
\Bigg]
\sum_{\substack{S\subseteq\{0,1,\ldots,d-1\}\\ |S|<d/2}}
1
\nonumber\\
&&+\frac{1}{2^{d}d^{2}}\sum_{\substack{i,j=0\\ i\neq j}}^{d-1}\Bigg[\ket{ij}\!\bra{jj}
+\ket{ji}\!\bra{jj}+\ket{jj}\!\bra{ij}+\ket{jj}\!\bra{ji}
+\sum_{\substack{i'=0\\ i'\neq i,j}}^{d-1}
(\ket{i'i'}\!\bra{ij}+\ket{i'i}\!\bra{i'j}
\nonumber\\
&&+\ket{i'i}\!\bra{ji'}
+\ket{ii'}\!\bra{i'j}+\ket{ii'}\!\bra{ji'}+\ket{ij}\!\bra{i'i'})\Bigg]
\sum_{\substack{S\subseteq\{0,1,\ldots,d-1\}\\ |S|<d/2}}
\mu_{S}(i)\mu_{S}(j)
\nonumber\\
&&+\frac{1}{2^{d}d^{2}}
\sum_{\substack{i,j=0\\ i\neq j}}^{d-1}
\sum_{\substack{i'=0\\ i'\neq i,j}}^{d-1}
\sum_{\substack{j'=0\\ j'\neq i,j,i'}}^{d-1}
\ket{ij}\!\bra{i'j'}
\sum_{\substack{S\subseteq\{0,1,\ldots,d-1\}\\ |S|<d/2}}
\mu_{S}(i)\mu_{S}(j)\mu_{S}(i')\mu_{S}(j')
\end{eqnarray}
where
\begin{equation}\label{eq:mudf}
\mu_{S}(i)=\left\{
\begin{array}{rl}
-1,&i\in S,\\[1mm]
1,&i\notin S.
\end{array}
\right.
\end{equation}
The last equality in Eq.~\eqref{eq:rosr} is due to $\mu_{S}^{2}(i)=1$ for any element $i$ and subset $S$ of $\{0,1,\ldots,d-1\}$.

Since the greatest integer less than $d/2$ is $(d-1)/2$ by oddness of $d$, it follows that
\begin{equation}\label{eq:ssdk}
\sum_{\substack{S\subseteq\{0,1,\ldots,d-1\}\\ |S|<d/2}}1
=\sum_{k=0}^{(d-1)/2}\binom{d}{k}
=\frac{1}{2}\sum_{k=0}^{d}\binom{d}{k}
=2^{d-1},
\end{equation}
where the second equality is by $\binom{d}{k}=\binom{d}{d-k}$ for any integer $k$ with $0\leqslant k\leqslant d$ and
the last equality is by $\sum_{k=0}^{n}\binom{d}{k}=2^{d}$.
For any $i,j\in\{0,1,\ldots,d-1\}$ with $i\neq j$, we have
\begin{eqnarray}\label{eq:msms}
\sum_{\substack{S\subseteq\{0,1,\ldots,d-1\}\\ |S|<d/2}}
\mu_{S}(i)\mu_{S}(j)
&=&\sum_{\substack{S\subseteq\{0,1,\ldots,d-1\}\\ |S|<d/2,~i,j\in S}}{\hskip -3mm} 1
+\sum_{\substack{S\subseteq\{0,1,\ldots,d-1\}\\ |S|<d/2,~i,j\notin S}}{\hskip -3mm}1
-2{\hskip -3mm}\sum_{\substack{S\subseteq\{0,1,\ldots,d-1\}\\ |S|<d/2,~i\in S,~j\notin S}}{\hskip -3mm}1\nonumber\\[2mm]
&=&\sum_{k=0}^{(d-5)/2}\binom{d-2}{k}
+\sum_{k=0}^{(d-1)/2}\binom{d-2}{k}
-2\sum_{k=0}^{(d-3)/2}\binom{d-2}{k}
=0,
\end{eqnarray}
where the first equality is from the definition of $\mu_{S}$ in Eq.~\eqref{eq:mudf} and
the last equality is by $\binom{d-2}{k}=\binom{d-2}{d-2-k}$ for any integer $k$ with $0\leqslant k\leqslant d-2$.

For any distinct $i,j,i',j'\in\{0,1,\ldots,d-1\}$, we also have
\begin{eqnarray}\label{eq:mmmm}
\sum_{\substack{S\subseteq\{0,1,\ldots,d-1\}\\ |S|<d/2}}
\mu_{S}(i)\mu_{S}(j)\mu_{S}(i')\mu_{S}(j')
&=&
6{\hskip -3mm}\sum_{\substack{S\subseteq\{0,1,\ldots,d-1\}\\ |S|<d/2,\\ i,j\in S,~ i',j'\notin S}}{\hskip -3mm}1
+\sum_{\substack{S\subseteq\{0,1,\ldots,d-1\}\\ |S|<d/2,\\ i,j,i',j'\in S}}{\hskip -3mm}1
+\sum_{\substack{S\subseteq\{0,1,\ldots,d-1\}\\ |S|<d/2,\\i,j,i',j'\notin S}}{\hskip -3mm}1
\nonumber\\
&&-4{\hskip -3mm}\sum_{\substack{S\subseteq\{0,1,\ldots,d-1\}\\ |S|<d/2,~i\in S\\ j,i',j'\notin S}}{\hskip -3mm}1
-4{\hskip -3mm}\sum_{\substack{S\subseteq\{0,1,\ldots,d-1\}\\ |S|<d/2,~i\notin S\\ j,i',j'\in S}}{\hskip -3mm}1
\nonumber\\[2mm]
&=&6\sum_{k=0}^{(d-5)/2}\binom{d-4}{k}+\sum_{k=0}^{(d-9)/2}\binom{d-4}{k}
+\sum_{k=0}^{(d-1)/2}\binom{d-4}{k}
\nonumber\\
&&-4\sum_{k=0}^{(d-3)/2}\binom{d-4}{k}
-4\sum_{k=0}^{(d-7)/2}\binom{d-4}{k}
=0,
\end{eqnarray}
where the first equality is from the definition of $\mu_{S}$ in Eq.~\eqref{eq:mudf} and the last equality is by $\binom{d-4}{k}=\binom{d-4}{d-4-k}$ for any integer $k$ with $0\leqslant k\leqslant d-4$.
Applying Eqs.~\eqref{eq:ssdk}, \eqref{eq:msms} and \eqref{eq:mmmm} to Eq.~\eqref{eq:rosr}, 
we can easily verify Eq.~\eqref{eq:rsrw}.

Now, we consider the Hermitian operator
\begin{equation}\label{eq:dheo}
H=\frac{1}{2d}\Phi
-\frac{d+2}{8d(d-1)}(\Pi-\Phi)
+\frac{d-2}{4d^{2}(d-1)}\Psi^{+}.
\end{equation}
Since the definitions of $\Pi$, $\Phi$, and $\Psi^{\pm}$ in Eq.~\eqref{eq:dxis} imply
\begin{equation}\label{eq:vvpt}
\Pi^{\PT}=\Pi,~~
\Phi^{\PT}=\frac{1}{d}(\Pi+\Psi^{+}-\Psi^{-}),~~
\Psi^{\pm\PT}=\pm \frac{d}{2}\Phi\mp\frac{1}{2}\Pi
+\frac{1}{2}(\Psi^{+}+\Psi^{-}),
\end{equation}
we have
\begin{equation}\label{eq:dthe}
H^{\PT}=\frac{d-2}{8d(d-1)}(\Phi-\Pi)
+\frac{3d-2}{4d^{2}(d-1)}\Psi^{+}-\frac{1}{2d(d-1)}\Psi^{-}.
\end{equation}
We can easily check  Eq.~\eqref{eq:hsc} using Eq.~\eqref{eq:rrt} together with Eqs.~\eqref{eq:dheo} and \eqref{eq:dthe}.

From the spectral decompositions of $H$ and $H^{\PT}$ in Eqs.~\eqref{eq:dheo} and \eqref{eq:dthe}, we have
\begin{equation}\label{eq:ddst}
\Tr|H|=\frac{d+2}{4d},~~\Tr|H^{\PT}|=\frac{3d-2}{4d}.
\end{equation}
Since Eq.~\eqref{eq:ddst} and Theorem~\ref{thm:ubu} lead us to the convergence of $\{p_{\sf PPT}(\mathcal{E}^{(3^{m})})\}_{m}$ to $1/2$, it follows from Corollary~\ref{cor:pell} that Eq.~\eqref{eq:lptu} holds.
Thus, the orthogonal separable state ensemble $\mathcal{E}$ in Example~\ref{ex:ddex} can be used to construct a scheme that hides one classical bit.
The details of a quantum data-hiding scheme are given in Sec.~\ref{sec:mfe}.

\section{Discussion}\label{sec:dis}
We have considered the discrimination of two bipartite quantum states and provide a quantum data-hiding scheme using orthogonal separable states. We have established a bound on the local minimum-error discrimination for bipartite state ensembles with specific properties related to partial transposition. Based on this bound, we have introduced a quantum data-hiding scheme that utilizes an orthogonal state ensemble exhibiting these same properties. We have further established a sufficient condition for an ensemble of two bipartite quantum states to be used in constructing a one-bit hiding scheme. 
Finally, our results have been illustrated with examples of orthogonal separable state ensembles satisfying this condition. 

Our data-hiding scheme ensures perfect data recovery without relying on asymptotic methods due to the orthogonality of the states. Furthermore, classical data is asymptotically concealed from LOCC measurements by repeatedly employing separable states of low-dimensional quantum systems. This makes our scheme more feasible for practical implementation especially in noisy intermediate-scale quantum environments. 

An interesting direction for future research is to explore separable-state data-hiding schemes hiding more than single bit or schemes for multi-users in multipartite quantum systems. It is also an interesting and important direction for future research to investigate which types of quantum correlations are necessary for quantum data hiding.  

\section*{Acknowledgments}
This work was supported by a National Research Foundation of Korea(NRF) grant funded by the Korean government(Ministry of Science and ICT)(No.NRF2023R1A2C1007039). JSK was supported by Creation of the Quantum Information Science R\&D Ecosystem (Grant No. 2022M3H3A106307411) through the National Research Foundation of Korea (NRF) funded by the Korean government (Ministry of Science and ICT).



\begin{thebibliography}{99}
\bibitem{sham1979}
A.~Shamir,
How to share a secret,
\href{https://doi.org/10.1145/359168.359176}
{Commun. ACM \textbf{22}, 612 (1979)}.
%
\bibitem{divi2002}
D.~P.~DiVincenzo, D.~W.~Leung, and B.~M.~Terhal,
Quantum data hiding,
\href{https://doi.org/10.1109/18.985948}
{IEEE Trans. Inf. Theory \textbf{48}, 580 (2002)}.
%
\bibitem{divi2003}
D.~P.~DiVincenzo, P.~Hayden, and B.~M.~Terhal,
Hiding Quantum Data,
\href{https://doi.org/10.1023/A:1026013201376}
{Found. Phys. \textbf{33}, 1629 (2003)}.
%
\bibitem{lami2018}
L.~Lami, C.~Palazuelos, and A.~Winter,
Ultimate Data Hiding in Quantum Mechanics and Beyond,
\href{https://doi.org/10.1007/s00220-018-3154-4}
{Commun. Math. Phys. \textbf{361}, 661 (2018)}.
%
\bibitem{terh2001}
B.~M.~Terhal, D.~P.~DiVincenzo, and D.~W.~Leung,
Hiding bits in Bell states,
\href{https://doi.org/10.1103/PhysRevLett.86.5807}
{Phys. Rev. Lett. \textbf{86}, 5801 (2001)}.
%
\bibitem{mori2013}
T.~Morimae,
Security of Terhal-DiVincenzo-Leung quantum-data-hiding protocol against inconclusive positive-operator-valued measure,
\href{https://doi.org/10.1103/PhysRevA.87.054302}
{Phys. Rev. A \textbf{87}, 054302 (2013)}.
%
\bibitem{ha20241}
D.~Ha and J.~S.~Kim, 
Nonlocal quantum state ensembles and quantum data hiding,
\href{https://doi.org/10.1103/PhysRevA.109.052418}
{Phys. Rev. A \textbf{109}, 052418 (2024).}
%
\bibitem{ha20251}
D.~Ha and J.~S.~Kim, 
Multi-player quantum data hiding by nonlocal quantum state ensembles,
\href{https://doi.org/10.48550/arXiv.2403.14363}
{arXiv:2403.14363.}
%
\bibitem{egge2002}
T.~Eggeling and R.~F.~Werner,
Hiding classical data in multipartite quantum states,
\href{https://link.aps.org/doi/10.1103/PhysRevLett.89.097905}
{Phys. Rev. Lett. \textbf{89}, 097905 (2002).}
%
\bibitem{hels1969}
C.~W.~Helstrom, 
Quantum detection and estimation theory,
\href{https://doi.org/10.1007/BF01007479}
{J. Stat. Phys. \textbf{1}, 231 (1969)}.
%
\bibitem{pere1996}
A.~Peres, 
Separability criterion for density matrices, 
\href{https://doi.org/10.1103/PhysRevLett.77.1413}
{Phys. Rev. Lett. \textbf{77}, 1413 (1996)}.
%
\bibitem{pptp}  
The eigenvalues of a partially transposed operator do not depend on the choice of basis or the subsystem to be transposed. 
For simplicity, we consider the standard basis and the second system throughout this paper.
%
\bibitem{chit2014}  
E.~Chitambar, D.~Leung, L.~Mančinska, M.~Ozols, and A.~Winter, 
Everything you always wanted to know about LOCC (but were afraid to ask), 
\href{https://doi.org/10.1007/s00220-014-1953-9}
{Commun. Math. Phys. \textbf{328}, 303 (2014)}.
%
\bibitem{ha20232}  
D.~Ha and J.~S.~Kim, 
Bipartite quantum state discrimination and decomposable entanglement witness, 
\href{https://doi.org/10.1103/PhysRevA.107.052410}
{Phys. Rev. A \textbf{107}, 052410 (2023).}
%
\end{thebibliography}
\end{document}